\newcommand{\tdiam}{\text{diam}}
\newcommand{\ttdiam}{\textstyle{\mathrm{diam}}}
\newcommand{\tdiag}{\text{diag}}
\newcommand{\NP}{\text{NP}}
\newcommand{\GP}{\text{GP}}
\newcommand{\K}{\mathbb{K}}
\newcommand{\real}{\ensuremath{\mathbb{R}}}
\newcommand{\until}[1]{\{1,\dots, #1\}}
\newcommand{\subscr}[2]{#1_{\textup{#2}}}
\newcommand{\supscr}[2]{#1^{\textup{#2}}}
\newcommand{\adj}{\operatorname{\textstyle{\mathrm{Adj}}}}
\let\proof\@undefined                        
\let\endproof\@undefined                  
\algnewcommand{\algorithmicgoto}{\textbf{go to}}%
\algnewcommand{\Goto}[1]{\algorithmicgoto~\ref{#1}}%
\algnewcommand{\LineComment}[1]{\Statex \(\triangleright\) #1}
\algnewcommand{\LineCommentN}[1]{\Statex \hspace{1cm}\(\triangleright\) #1}
\newcommand{\st}{\operatorname{s.t.}}
\newtheorem{prop}{Proposition} 
\newtheorem{thm}{Theorem}
\newtheorem{lem}{Lemma}
\newtheorem{defn}{Definition}
\newtheorem{rem}{Remark}
\newtheorem{problem}{Problem}
\let\oldbibliography\thebibliography
\renewcommand{\thebibliography}[1]{%
  \oldbibliography{#1}%
}
\newcommand{\mk}[1]{{\color{black} #1}}
\newcommand{\md}[1]{{\color{black} #1}}
\newcommand{\mh}[1]{{\color{black} #1}}
\newcommand{\sm}[1]{{\color{black} #1}}
\newcommand{\moo}[1]{{\color{black} #1}}
\newcommand{\mkj}[1]{{\color{black} #1}}
\newcommand{\mmoh}[1]{{\color{black} #1}}
\newcommand{\mmok}[1]{{\color{black} #1}}
\begin{document}

\title{\LARGE \bf \md{Guaranteed Privacy-Preserving $\mathcal{H}_{\infty}$-Optimal Interval Observer \mkj{Design} for Bounded-Error LTI Systems}} 

\author{%
Mohammad Khajenejad and Sonia Mart{\'\i}nez\\
\thanks{
{M. Khajenejad and S. Mart{\'\i}nez are with the Department of
  Mechanical and Aerospace Engineering, University of California, San
  Diego, San Diego, CA, USA (e-mail: \{mkhajenejad,
  soniamd\}@ucsd.edu).} \mkj{This work is supported by the grants ONR N00014-19-1-2471 and NSF NSF 2003517,3.}} 
}

\maketitle
\thispagestyle{empty}
\pagestyle{empty}

\begin{abstract}
  This paper \sm{furthers current research into the notion of}
  \emph{guaranteed privacy}, which provides a deterministic
  characterization of the \sm{privacy of output signals of a dynamical
    system or mechanism}.  Unlike stochastic differential privacy,
  guaranteed privacy offers strict bounds on the proximity between the
  ranges of two sets of estimated data.  \sm{Our approach relies on}
  synthesizing an interval observer \mkj{for a perturbed linear time-invariant
    (LTI) bounded-error system. The design procedure} incorporates
    \mkj{a} bounded noise
  perturbation factor computation and observer gain\mmoh{s} synthesis. Consequently, \mkj{the} observer
  simultaneously provides guaranteed private and stable
  interval-valued estimates for \mmoh{a} desired variable. We demonstrate
  the optimality of our design by minimizing the
  $\mathcal{H}_{\infty}$ norm of the observer error system. Furthermore, we
  assess the accuracy of our proposed mechanism by quantifying the
  loss incurred when considering guaranteed privacy specifications. Finally, we illustrate the outperformance of the proposed approach to differential privacy through simulations.
  \end{abstract}
  \vspace{-0.1cm}
  
\section{Introduction}\label{sec:intro} 
The preservation of data privacy and security has become a pivotal
concern in the oversight of cyber-physical systems (CPS) and their
public credibility. Malicious actors can expand the scope of their
attacks by extracting valuable information from the numerous physical,
control, and communication components of the system; inflicting harm
upon both the CPS and its users. \sm{While this data may initially be
  hidden, such information may be inferred by the examination of other
  mixed data, which made available either unintentionally or to
  provide a system-wide service.} Consequently, a significant
endeavor is underway to develop resilient control strategies that
ensure data security within these systems
\cite{cortes2016differential}. \sm{This manuscript contributes to this area
  of research by examining a new concept of guaranteed privacy and its
  application in dynamic system estimation. }  


{\emph{{Literature Review}.}}  Numerous information-theoretic notions
have been proposed to measure the concept of privacy, and these
definitions can be put into practice when dealing with the analysis of
real-time data streams \cite{sankar2013role}. A main approach to this
is \emph{differential privacy}~\cite{dwork2006calibrating}, originally
proposed for the protection of databases of individual records subject
to public queries. A system handling sensitive inputs achieves
differential privacy through the randomization of its responses. This
randomization is carefully designed to ensure that the distribution of
publicly disclosed outputs remains relatively insensitive to the data
contributed by any individual participant. This concept has been
broadened and applied across various domains, including machine
learning and
regression~\cite{chaudhuri2011differentially,zhang2012functional,hall2013differential},
control, estimation, and
verification~\cite{wang2017differential,han2021numerical}, multi-agent
systems (consensus, message
passing)~\cite{huang2015differentially,han2016differentially,hale2015differentially},
as well as optimization and
games~\cite{ding2021differentially,li2020privacy,ye2021differentially}.
  
Considering dynamic settings, differential privacy has been applied to
filtering, assuming either that the statistical characterizations of
uncertainties are known \cite{le2013differentially} or that there is
no disturbance \cite{le2020differentially}. However, these approaches
are not applicable to bounded-error settings where uncertainties are
only assumed to be bounded (set-valued) with unknown distributions. In
such settings, interval observers
\cite{zheng2016design,khajenejad2022h} are capable of providing
guaranteed and uniformly bounded state estimates. The work in
\cite{degue2020differentially} proposed a differentially private
mechanism to augment an existing interval observer \mkj{for LTI systems}. 
  This was done via an input perturbation
mechanism, by which stochastic bounded-support noise was added to each
individual's data prior to sending it to the observer. 
The existence of such initial stable observer (i.e, a stabilizing
gain) was assumed to be granted. \mkj{Moreover,} after the injection of the additional
stochastic perturbation, neither the \mkj{correctnes, i.e., the framer property of the} observer, 
nor its stability \mkj{were} re-evaluated. 
While~\cite{degue2020differentially} provide\mkj{d} a
first design method that is  inclusive of differential privacy, the question of guaranteed-\mkj{private}-stable and optimal design was left unaddressed, which this paper contributes toward.

{\emph{{Contributions}.}} 
We start by refining a new notion of guaranteed privacy, which characterizes privacy in terms of how close the ranges of two set-valued estimates of the published data are, \mmoh{i.e, how small the distance of the set-estimates and how big their intersection are}. \mmoh{As opposed to stochastic differential privacy, guaranteed privacy is deterministic, i.e., provides hard and quantifiable upper bounds for the distance between \emph{any} two possible values belonging to the guaranteed set of estimates of the published data.} Then, we synthesize an interval observer \mmoh{for a perturbed LTI bounded-error system}, though designing \mkj{a} bounded noise perturbation factor, as well as observer gain\mmoh{s}. \mmok{Our observer design, which is \emph{correct-by-construction}, returns guaranteed set-valued estimates of a desired variable, which is a function/transformation of the system state. This allows us to reserve the choice of observer gains and perturbation factor to satisfy both stability and privacy.} \mkj{Hence, the synthesized observer} simultaneously returns guaranteed private and stable interval-valued estimates of \mmoh{the} desired variable. Further, we show that our design is optimal, in the sense that it minimizes the $\mathcal{H}_{\infty}$ norm of the observer error system. Finally, we study the accuracy of our proposed mechanism by quantifying the loss due to considering guaranteed privacy specifications. \mmoh{This results in designing guaranteed privacy-preserving observers that are optimal
in the sense that they minimize the corresponding inaccuracy due to the perturbations. Furthermore, the designed mechanisms are robust against uncertainties in environment, in the sense of considering the worst-case scenario. \mmok{Finally, our design approach is based on solving a mixed-integer semi-definite programs (MISDP), with the main advantage that it does not require us to impose additional constraints, and hence conservatism, as is the case for the semi-definite programming (SDP)-based solutions. In other word, our proposed solution is \emph{tight}.}} 

\section{Preliminaries}
In this section, we introduce basic notation, as well as preliminary
concepts and results used in the sequel.

{\emph{{Notation}.}}  $\mathbb{R}^{\moo{m}},\mathbb{R}^{\moo{m} \times p}$, and
$\mathbb{R}^\moo{m}_{\geq 0}$ denote the $\moo{m}$-dimensional Euclidean space,
the sets of $\moo{m}$ by $p$ matrices, and nonnegative vectors in
$\mathbb{R}^\moo{m}$, respectively.  Also $\mathbf{1}_{\moo{m}},\mathbf{0}_{\moo{m}}$, and
$\mathbf{0}_{\moo{m}\times p}$ denote the the vectors of ones and zeros in
$\mathbb{R}^{\moo{m}}$, and the matrix of ones in $\mathbb{R}^{\moo{m} \times p}$,
respectively. Further, \mmoh{$\mathbb{R}^{n \times n}_{\succ 0}$ and} $\mathbb{D}^{{\moo{m}} \times {\moo{m}}}_{>0}$ denote the
spaces of $n$ by $n$ \mmoh{positive definite and positive} diagonal matrices, \mmoh{respectively}. Given
$M \in \mathbb{R}^{\moo{m} \times p}$, $M^\top$ represents its transpose,
$M_{ij}$ denotes $M$'s entry in the $\supscr{i}{th}$ row and the
$j^\textup{th}$ column,
$M^{\oplus}\triangleq \max(M,\mathbf{0}_{\moo{m} \times p})$ \mmoh{(component-wise)},
$M^{\ominus}=M^{\oplus}-M$, $|M|\triangleq
M^{\oplus}+M^{\ominus}$, \moo{and $\sigma_{\max}(M) \triangleq \max_{x} \|Mx\|_2$ s.t. $\|x\|_2 \triangleq \sum_{i=1}^p x_i^2=1$ denotes the maximum singular value of $M$}. Furthermore, for
$a,b \in \mathbb{R}^n, a\leq b$ means
$a_i \leq b_i, \forall i \in \until{\moo{m}}$, while
$\tdiag (D^1,\dots,D^N)$ denotes a block diagonal matrix with diagonal
blocks $D^1,\dots,D^N$. \mkj{Further}, a function
$\theta:\mathbb{R}_{\geq 0} \to \mathbb{R}_{\geq 0}$ is of class
$\mathcal{K}$ (resp.~$\mathcal{K}_{\infty}$) if it is
continuous, 
and strictly increasing (resp.~if is of class $\mathcal{K}$ and also
unbounded). Moreover,
$\kappa:\mathbb{R}_{\geq 0} \to \mathbb{R}_{\geq 0}$ if of class
$\mathcal{KL}$ if for each fixed $k \geq 0$, $\kappa(.,t)$ is of class
$\mathcal{K}$ and for each $s\geq 0$, $\kappa(s,t)$ decreases to zero
as $t \to \infty$. \moo{Finally, given any arbitrary sequence $\{s_k\}_{k=0}^{\infty}$, $\|s\|_{\ell_2} \triangleq \sqrt{\Sigma_{k=0}^{\infty}\|s_k\|^2_2}$ and $\|s\|_{\ell_{\moo{\infty}}} \triangleq \sup_{k \in
  \mathbb{K}}\|s_k\|_2$ denote its $\ell_2$ and $\ell_{\infty}$ signal norms.}
\begin{defn}[Intervals]\label{def:interval}
  \mh{An $\moo{m}$-dimensional interval {$\mathcal{I} \triangleq
      [\underline{\md{z}},\overline{\md{z}}] \subset 
      \mathbb{R}^{\moo{m}}$}, is the set of
    all real vectors $\md{z \in \mathbb{R}^{\moo{m}}}$ that satisfy
    $\underline{\md{z}} \le \md{z} \le \overline{\md{z}}$. {Moreover,} 
    $\textstyle{\mathrm{diam}}(\mathcal{I}) \triangleq
    \|\overline{\md{z}}-\underline{\md{z}}\|\mk{_{\infty}\triangleq
      \max_{i \in
        \{1,\cdots,\md{\moo{m}}\}}\md{|\overline{z}_i-\underline{z}_i|}}$
    denotes the {diameter} or {interval width} of
    $\mathcal{I}$, \mkj{while $c \triangleq \frac{\underline{z}+\overline{z}}{2}$ is the center of $\mathcal{I}$}. Finally, $\mathbb{IR}^n$ denotes the space of all
    $n$-dimensional intervals, also referred to as \emph{interval
      vectors}. } 
    \end{defn}
\mmoh{Note that for a general convex set $\mathcal{S}$, $\tdiam (\mathcal{S})$ denotes the diameter of $\mathcal{S}$, i.e., $\max_{q,q'\in \mathcal{S}} \|q-q'\|_{\infty}$.}
\begin{prop}\cite[Lemma 1]{efimov2013interval}\label{prop:bounding}
  Let $A \in \real^{p \times n}$ and $\underline{x} \leq x \leq
  \overline{x} \in \real^n$. Then, $A^+\underline{x}-A^{-}\overline{x}
  \leq Ax \leq A^+\overline{x}-A^{-}\underline{x}$. As a corollary, if
  $A$ is non-negative, $A\underline{x} \leq Ax \leq A\overline{x}$.
\end{prop}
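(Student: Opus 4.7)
The plan is to use the standard decomposition $A = A^{\oplus} - A^{\ominus}$, where both $A^{\oplus} \triangleq \max(A,\mathbf{0}_{p\times m})$ and $A^{\ominus} = A^{\oplus} - A$ are, by construction, entrywise nonnegative matrices. Under this decomposition, $Ax$ splits naturally as $A^{\oplus}x - A^{\ominus}x$, so the problem reduces to bounding each of the two nonnegative-matrix-times-vector products separately and then subtracting.

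First, I would establish the monotonicity fact for nonnegative matrices: if $M \in \real^{p\times m}$ has all entries $M_{ij} \ge 0$ and $\underline{x} \le x \le \overline{x}$ componentwise, then $(Mx)_i = \sum_j M_{ij} x_j$ is a nonnegative linear combination of the $x_j$, so the inequalities $\underline{x}_j \le x_j \le \overline{x}_j$ lift termwise to $M\underline{x} \le Mx \le M\overline{x}$. This immediately yields the corollary for nonnegative $A$, and it will be applied twice: once to $M = A^{\oplus}$ giving $A^{\oplus}\underline{x} \le A^{\oplus}x \le A^{\oplus}\overline{x}$, and once to $M = A^{\ominus}$ giving $A^{\ominus}\underline{x} \le A^{\ominus}x \le A^{\ominus}\overline{x}$.

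Second, I would flip the sign of the second chain to obtain $-A^{\ominus}\overline{x} \le -A^{\ominus}x \le -A^{\ominus}\underline{x}$ and then add it to the first chain. Since componentwise inequalities in $\real^p$ are preserved under addition, this produces
\begin{equation*}
A^{\oplus}\underline{x} - A^{\ominus}\overline{x} \;\le\; A^{\oplus}x - A^{\ominus}x \;\le\; A^{\oplus}\overline{x} - A^{\ominus}\underline{x},
\end{equation*}
and replacing $A^{\oplus}x - A^{\ominus}x$ with $Ax$ via the decomposition gives exactly the claimed bound.

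There is no real obstacle here: the only subtlety is making sure the nonnegativity of both $A^{\oplus}$ and $A^{\ominus}$ is spelled out (since it is what legitimizes the monotonicity step), and being careful that the sign flip swaps the roles of $\underline{x}$ and $\overline{x}$ in the $A^{\ominus}$ terms. The corollary for nonnegative $A$ then follows trivially by observing that $A^{\ominus} = \mathbf{0}_{p\times m}$ and $A^{\oplus} = A$ in that case.
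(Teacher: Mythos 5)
Your argument is correct: the paper itself offers no proof of this proposition, deferring entirely to the cited reference \cite{efimov2013interval}, so there is nothing internal to compare against. Your decomposition $A = A^{\oplus}-A^{\ominus}$ with both parts nonnegative, the termwise monotonicity step, and the sign flip that swaps $\underline{x}$ and $\overline{x}$ in the $A^{\ominus}$ terms is exactly the standard argument behind that cited lemma, and it establishes both the main bound and the nonnegative-$A$ corollary without any gaps.
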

\section{Problem Formulation} \label{sec:Problem} 
\subsection{System Assumptions}
Consider a set of $N$ linear time invariant discrete-time
bounded-error systems (agents) with the following dynamics:
\begin{align}\label{eq:system_ind}
\begin{array}{rl}
x^i_{k+1}&=A^ix^i_k+\sum_{j\neq i}A^{ij}x^j_k+W^iw^i_{k},\\
y^i_{k}&=C^ix^i_k+V^iv^i_{k},
\end{array}
\end{align}
where 
$k \in \K \triangleq \mathbb{N} \cup \{0\}, i \in \{1,\dots,
N\}, x^i_0 \in [\underline{x}^i_0,\overline{x}^i_0]$,
$x^i_k \in \real^{n^i}$ is the state vector of the agent $i$ and
$w^i_k \in \mathcal{I}^i_{w} \triangleq
[\underline{w}^i,\overline{w}^i] \subset \real^{n^i_{w}}$ is a bounded
process disturbance. Furthermore, at time step
$k$, every system (agent) $i$ takes (originates) a distinct
privacy-sensitive vector-valued measurement signal
$y^i_k \in \real^{m^i}$, which is affected by
$v^i_k \in \mathcal{I}_{v}^i \triangleq
[\underline{v}^i,\overline{v}^i] \subset \real^{n_v^i}$, a bounded
sensor (measurement) noise signal.
Finally,
$A^i \in \real^{n^i \times n^i},W^i \in \real^{n^i \times n^i_w},C^i
\in \real^{m^i \times n^i}$ and $V^i \in \real^{m^i \times n^i_v}$ are
known constant matrices, while $A^{ij}, j \neq i,$ represent coupling
matrices that capture the influence of the other agents on the agent
$i$. Unlike to the work in \cite{degue2020differentially}, we do not
impose any restrictions on $W^i$.  The global system dynamics can be
constructed by the agents and with
$\moo{n}\triangleq \sum_{i=1}^N n^i$ 
  states and $\moo{m}\triangleq \sum_{i=1}^N m^i$ outputs, as the following plant:
\begin{align}\label{eq:system}
\mathcal{G}: \begin{cases}x_{k+1}&=Ax_k+Ww_{k},\\
y_{k}&=Cx_k+Vv_{k} \end{cases},
\end{align}
where $x_0$ is unknown, but satisfies
$\underline{x}_0 \leq x_0 \leq \overline{x}_0$. Moreover,
\begin{align*}
  \xi_k &\triangleq [(\xi^1_k)^\top,\dots (\xi^N_k)^\top]^\top, \ \forall \xi \in \{x,y,w,v\},\\
  J &\triangleq \tdiag (J^1,\dots,J^N), \quad \forall J \in \{W,C,V\}, \\
A_{ii}&\triangleq A^i, A_{ij}=A^{ji}, \forall j \neq i, \forall i \in \{1,\dots,N\},
\end{align*}
and the data $\underline{x}_0,\overline{x}_0, A,C,W,V$ are assumed to
be public information. Furthermore, the bounded general state and
measurement noise signals satisfy
$\underline{\xi} \leq \xi_k \leq \overline{\xi},\forall \xi \in
\{w,v\}$, where
$\underline{\xi}\triangleq [(\underline{\xi}^1_k)^\top,\dots
(\underline{\xi}^N_k)^\top]^\top,\overline{\xi} \triangleq
[(\overline{\xi}^1_k)^\top,\dots (\overline{\xi}^N_k)^\top]^\top$.

{\color{black} There is an operator, whose objective is to obtain
  interval-valued estimates of $z_k$ given by:
  $z_k \triangleq \Gamma x_k = \textstyle{\sum}_{i=1}^N \Gamma^i
  x^i_k$, where $\Gamma^i$ can be any arbitrary matrices.  \mmoh{The
    variable $z_k$ represents a special output of the system, e.g., a
    selection (subset) of the individual states or outputs or a linear
    combination of these, which can be used for subsequent decisions
    including future agents' actions (cf. Section~\ref{sec:example}
    for a specific example concerning a dynamic market).}  The
  estimates of $z_k$ should satisfy
  $\underline{z}_k \leq z_k \leq \overline{z}_k$ and be as tight as
  possible. 
 
  To do this, the operator employs interval estimates of $x_k$, which
  in turn depend on the system data outputs $y_k$. The goal of the
  operator is to ensure that the publicly released interval estimates
  of $z_k$, $\mathcal{Z}_k \triangleq [\underline{z}_k , \underline{z}_k ] \equiv \mathcal{M}(y_k)$,
  which are functions of the data $y_k$,
  \sm{ensure} the privacy of the
  data in a guaranteed manner.} The rationale behind this concern
lies in the potential for extracting fresh insights about the
multi-agent system 
through interval estimates of $z_k$. 
 This can be accomplished, for instance, by exploiting linkage
 attacks. In such scenarios, an adversary \sm{can deduce novel
   information about}
 specific individuals \cite{sweeney2002achieving} by combining the
 newly published information with \sm{additional side knowledge}.
 
 \mmoh{Hence, to address this concern,} the operator aims to satisfy a deterministic notion
 of privacy, 
  \sm{which ensures} that
  the publicly released $\underline{z}_k,\overline{z}_k$ guarantee
  \emph{hard (deterministic) privacy bounds} for each agent's data.
  \mmoh{To achieve this, the operator/system designer perturbs the state equation as well as the
    output signal $y_k$, \mmok{and consequently $z_k$,} by some
    intentionally added bounded noise/disturbance, and a 
    perturbation factor is being designed such that the set-valued estimates of $z_k$ corresponding to two close enough, i.e., adjacent outputs $y_k$ and $y'_k$, are hard to be distinguished by the adversary.}     


The synthesis of the interval-valued estimates is being done through
an $\mathcal{H}_{\infty}$-optimal interval observer, which is formally
introduced via the following sequence of definitions.
\begin{defn}[Interval Framer]\label{defn:framer}
  The sequences $\{\underline{x}_k,\overline{x}_k\}_{k=0}^{\infty}$
  are called lower and upper framers for the states of system
  $\mathcal{G}$ if
  $\forall w_k \in [\underline{w},\overline{w}], \forall v_k \in
  [\underline{v},\overline{v}], \underline{x}_k \leq x_k \leq
  \underline{x}_k$. Further, any dynamical system $\widehat{\mathcal{G}}$
  whose states are framers for the states of $\mathcal{G}$, i.e., any
  (tractable) algorithm that returns framers for the
  states of \eqref{eq:system}, is called an interval framer for $\mathcal{G}$.
\end{defn}
\begin{defn}[Input-to-State Stability \& Interval Observer]\label{defn:stability}
  An interval framer $\widehat{\mathcal{G}}$ is input-to-state stable
  (ISS), if the framer error
  $e^x_k \triangleq \overline{x}_k-\underline{x}_k$ is bounded as
  follows:
\begin{align}\label{eq:ISS}
  \|e^x_k\|_2
  \leq \kappa (\|e^x_0\|_2,k)+\theta(\|\delta_{\tilde{w}}\|_{\ell_{\moo{\infty}}}) \quad
  \forall k \in \mathbb{K},
\end{align}
where
\mmoh{$\delta_{\tilde{w}} \triangleq [\delta_{w}^\top \
\delta_{v}^\top]^\top \triangleq [(\overline{w}-\underline{w})^\top (\overline{v}-\underline{v})^\top]^\top$}
   \moo{is} the augmented vector of
 noise widths, \moo{while} $\kappa$ and $\theta$ are functions of classes
 $\mathcal{KL}$ and $\mathcal{K}_{\infty}$, respectively.
An ISS interval framer is called an
interval observer.
\end{defn}
 \begin{defn}[$\mathcal{H}_{\infty}$-Optimal Interval Observer]\label{defn:opt_obs}
   An interval observer $\widehat{\mathcal{G}}$ is
   $\mathcal{H}_{\infty}$-optimal if the $\mathcal{H}_{\infty}$ gain
   of the \sm{dynamic system in the state estimation error}
   $\moo{\widetilde{\mathcal{G}}}$, 
     i.e.,
   $\|\moo{\widetilde{\mathcal{G}}}\|_{\mathcal{H}_{\infty}} \triangleq
   \sup_{\delta_{\tilde{w}\ne 0}}
   \frac{\|e^x\|_{\ell_2}}{\|\delta_{\tilde{w}}\|_{\ell_2}}$, is
   minimized, with its minimum value called the observer's noise
   attenuation level. 
 \end{defn}
 Note that we do not assume that such an observer exists, 
 rather we want to synthesize it while satisfying privacy
 specifications with deterministic bounds. This is formalized via the
 notion of \emph{guaranteed privacy} as described next.
 
\subsection{Guaranteed Privacy} 
To formally define guaranteed privacy, we use a 
\moo{version} of \emph{adjacency relation}. 
  Let $\mathcal{Y}$ denote the space of
measured signal sequences $\{y_k\}_{k \geq 0}$, and $\rho >0$
be given. A symmetric binary relation on
$\mathcal{Y}$, denoted \moo{$\adj_{\rho}$}, identifies the types of
variations in $y$ that we aim to make hard to detect. 
\begin{defn}[$\rho$ Adjacency Relation]\label{def:adjacency}
  \mmok{Given $\rho >0$, and} any $y,y' $ \mmok{in a vector space
    $\mathcal{Y}$,}
  \begin{align}\label{eq:adj}
    \moo{\adj_{\rho}} (y,y') \ \text{if and only if} \ \|y-y'\|_{\moo{2}} \leq \rho.
  \end{align}
\end{defn}
Such interpretation of adjacent datasets implies that a single
participant \sm{possibly} contributes additively to each $y^i$ in a
way that its overall impact on the dataset $y$ is bounded in
\moo{$2$}-norm by
$\rho$. 
\mmok{Before formally introducing the notion of guaranteed privacy, we provide some insights about what we aim to achieve by satisfying this notion of privacy in bounded-error settings. Our goal is to ensure that the set-valued estimator mapping (or mechanism), i.e., 
\begin{align*}
\mathcal{M}(y_k)=\mathcal{Z}_k \triangleq [\underline{z}_k,\overline{z}_k]
\end{align*}
is private in the sense that an adversary cannot distinguish between $\rho$-adjacent data as they produce very close outputs (i.e.~the
  inversion of $\mathcal{M}$ leads to a large dataset). To ensure this, we design the mechanism $\mathcal{M}$ such that after some large enough time step and for any two arbitrary $\rho$-adjacent outputs $y,y'$ and their corresponding set-valued estimates $\mathcal{Z}_k=\mathcal{M}(y_k),\mathcal{Z}'_k=\mathcal{M}(y'_k)$:
 \begin{enumerate}[(i)]
 \item the sets $\mathcal{Z}_k$ and $\mathcal{Z}'_k$ remain close to each other, i.e., the distance between the two sets is small, and \label{privacy:distance}
 \item there is a large enough overlap (intersection) between $\mathcal{Z}_k$ and $\mathcal{Z}'_k$. \label{privacy:intersection} 
 \end{enumerate}
 These two properties, which are formalized through the following definition, together imply that the sets $\mathcal{Z}_k$ and $\mathcal{Z}'_k$ are hard to be distinguished by the adversary, hence ensuring that the estimator mapping
  $\mathcal{M}(y_k)$ is private.} 

\begin{defn}[Guaranteed Privacy]\label{defn:guar_privacy}
  Let $\epsilon, \delta \geq 0$, and $D$ be \moo{vector space} equipped
  with the symmetric binary relation $\moo{\adj_{\rho}}$ given in
  Definition \ref{def:adjacency}. A deterministic set-valued mechanism
  $\mathcal{M}: D \to \mathbb{IR}^n$ is \mmoh{weakly $\epsilon$-guaranteed
  private} w.r.t.~$\moo{\adj_{\rho}}$, 
  if for all $d,d' \in D$ such
  that $\moo{\adj_{\rho}}(d,d')$, all $q \in \mathcal{M}(d)$, and all
  $q' \in \mathcal{M}(d')$, we
  have 
  \mmoh{
\begin{align}
\label{eq:GP_distance}&\|q-q'\|_2 \leq e^{\epsilon}.
\end{align}
Moreover, $\mathcal{M}$ is strongly $(\epsilon,\delta)$-guaranteed private w.r.t.~$\moo{\adj_{\rho}}$ if it is weakly $\epsilon$-guaranteed
 private and the following holds:
\begin{align}
\label{eq:GP_overlap}&\ttdiam (\mathcal{M}(d) \cap \mathcal{M}(d')) \geq \delta,
\end{align}
}
where $\mathcal{M}(d)$ and $\mathcal{M}(d')$ are the entire interval
range of $\mathcal{M}$ applying \mkj{to} $d$ and $d'$, respectively.
\end{defn}
\mmoh{The inequality in \eqref{eq:GP_distance} ensures condition \eqref{privacy:distance} above, while the one in \eqref{eq:GP_overlap} implies \eqref{privacy:intersection}. So, if these two together hold for sufficiently small $\epsilon$ and large enough $\delta$, then an adversary can hardly distinguish between $\rho$-adjacent data as they produce close outputs, i.e.,~the
  inversion of $\mathcal{M}$ leads to a large dataset.} 
  
 It is worth mentioning that this notion
of guaranteed privacy is stronger than the one introduced
in~\cite[Definition 3]{khajenejad2022guaranteed} 
for distributed nonconvex optimization. 
In other words, Definition~\ref{defn:guar_privacy} implies the one in
\cite{khajenejad2022guaranteed}, but not conversely.  \mkj{Further},
we re-emphasize the difference between this notion with that of
differential privacy
in~\cite{nozari2016differentially,ding2021differentially,li2020privacy}.
Under differential privacy, the statistics of the output of
$\mathcal{M}$, i.e., the probability of the values of $\mathcal{M}$,
  is allowed to change only slightly if there is a slight
perturbation of the data $y$.  Instead, when guaranteed privacy is
considered, the entire range of the set-valued mechanism $\mathcal{M}$
is allowed to change only slightly with respect to the perturbed data.


With this being said, our problem can be cast as follows:
\begin{problem}[Guaranteed Privacy-Preserving Interval Observer
  Design]\label{pro:GPPIOD}
  Given system \moo{$\mathcal{G}$}, 
    design a mechanism (or mapping)
  $\mathcal{M}$ that simultaneously
\begin{itemize}
\item  outputs framers for $z_k$ through a to-be-designed framer system (cf. Definition \ref{defn:framer}), 
\item   ensures that the framer system is ISS, i.e., the framer system is an interval observer (cf. Definition \ref{defn:stability}), 
\item satisfies the guaranteed privacy of data $\{y_k\}_{k=0}^{\infty}$ (cf. Definition \ref{defn:guar_privacy}), and 
\item  guarantees that the observer design is optimal in the sense of $\mathcal{H}_{\infty}$ (cf. Definition \ref{defn:opt_obs}). 
\end{itemize} 
\end{problem}

\section{Guaranteed Privacy-Preserving Interval Observer
  Design} \label{sec:observer} In this section, we introduce our
proposed strategy to design a guaranteed privacy-preserving mechanism
(or mapping) for interval observer design, addressing
Problem~\ref{pro:GPPIOD}. {Our proposed approach is a \mmoh{threefold}
  procedure. Before describing the technical details of such procedure in the following subsections, below, we provide a brief intuitive overview of all the steps.  
  
   \textbf{(i)} First, we perturb system $\mathcal{G}$ by
  injecting additional set-valued output noise \mmoh{and state disturbance}, represented by the
  to-be-designed perturbation factor $\alpha >0$, to ``hide" the true value of the data and
  prevent it to be revealed by the adversary. 
  We assume the additional bounded perturbation noise and disturbance, $v^a_k$ and $w^a_k$, satisfy:
  \mmoh{\begin{align*}
  &\underline{\hat{v}} \triangleq \alpha \underline{v} \leq v^a_k+v_k \leq \alpha \overline{v} \triangleq \overline{\hat{v}} ,\\
   & \underline{\hat{w}} \triangleq \alpha \underline{w} \leq w^a_k+w_k \leq \alpha \overline{w} \triangleq \overline{\hat{w}},
   \end{align*}}
  where $\alpha$ 
  will be designed later (along with other design factors) to satisfy
  the desired properties of the mechanism. After injecting the
  controlled \mmoh{perturbations $v^a_k$ and $w^a_k$, we obtain the following perturbed system:
  \begin{align}\label{eq:system_perturbed}
\mathcal{G}_p: \begin{cases}x_{k+1}&=Ax_k+W\hat{w}_{k}, \quad \hat w_k \in [\underline{\hat{w}},\overline{\hat{w}}],\\
y_{k}&=Cx_k+V\hat{v}_{k},\quad \ \ \hat v_k \in [\underline{\hat{v}},\overline{\hat{v}}], \\
z_k&=\Gamma x_k.  \end{cases}
\end{align}} 
\mmoh{We then propose an interval observer for $\mathcal{G}_p$, i.e., a set-valued mapping/mechanism $\mathcal{M}$, that outputs stable intervals $\mathcal{Z}_k=\mathcal{M}(y_k)$, that are guaranteed to contain true values of $z_k$, given output $y_k$}.
  
  \textbf{(ii)} \mmoh{Next}, we provide an upper bound for the distance between $z_k \in\mathcal{Z}_k=\mathcal{M}(y_k)$ and
  $ z'_k \in \mathcal{Z}'_k=\mathcal{M}(y'_k)$, \mmoh{corresponding to sufficiently close $y_k$ and $y'k$}, by upper bounding the
  distance between the centers of the intervals $\mathcal{Z}_k$ and $\mathcal{Z}'_k$. \mmoh{This results in sufficient conditions to guarantee that $\mathcal{Z}_k$ and $\mathcal{Z}'_k$ remains close to each other.}  
  
  \mmoh{\textbf{(iii)} Finally, we ensure that the summation of the radii of $\mathcal{Z}_k$ and $\mathcal{Z}'_k$ remain bigger enough than the distance between their centers. This implies that the two intervals have a large enough intersection.} 
  
 Note that \textbf{(i)}--\textbf{(iii)} together result in a guaranteed privacy-preserving stable and optimal interval observer design, where all these specifications are satisfied simultaneously using \mmoh{six} degrees of
  freedom: \mmoh{$\alpha, Q, P, \tilde{L}, \tilde{N}, \tilde{T}$}, designed through Theorems
  \ref{thm:privacy} and \ref{thm:synthesis}.}
\subsection{Framer Structure}
\mmoh{We start by proposing an interval observer for the perturbed system $\mathcal{G}_p$, i.e., a dynamical system that outputs correct and stable framers for $z_k$. To do so, we first define an auxiliary state $\xi_k$: 
\begin{align}\label{eq:aux_state}
\xi_k \triangleq x_k-N(y_k-V\hat v_k)=(I-NC)x_k=Tx_k,
\end{align}
where $T$ and $N$ are to-be-designed observer gains satisfying
\begin{align}\label{eq:add_constraint}
T+NC=I.
\end{align}
Then, from \eqref{eq:system_perturbed}, \eqref{eq:aux_state} and the fact that $L(y_k-Cx_k-V\hat v_k=0)$ for any observer gain $L \in \mathbb{R}^{n \times m}$, we obtain the following equivalent representation for the dynamics of $\mathcal{G}_p$:  
\begin{align}\label{eq:equivalent_system}
\hspace{-.3cm}\begin{cases}
\xi_{k+1}\hspace{-.1cm}=\hspace{-.1cm}Tx_{k+1}\hspace{-.1cm}=\hspace{-.1cm}T(Ax_k\hspace{-.1cm}+\hspace{-.1cm}W\hat w_k)\hspace{-.1cm}+\hspace{-.1cm}L(y_k\hspace{-.1cm}-\hspace{-.1cm}Cx_k\hspace{-.1cm}-\hspace{-.1cm}V\hat v_k)\\
=(TA-LC)x_k+TW\hat w_k-TV\hat v_k+Ly_k\\
=\hspace{-.1cm}(TA\hspace{-.1cm}-\hspace{-.1cm}LC)\xi_k\hspace{-.1cm}+\hspace{-.1cm}TW\hat w_k\hspace{-.1cm}-\hspace{-.1cm}(T\hspace{-.1cm}+\hspace{-.1cm}(TA-LC)N)V\hat v_k\\
+(L+(TA-LC)N)y_k,\\
x_k=\xi_k+N(y_k-V\hat v_k).
\end{cases}
\end{align}
} 
 \mmoh{
Next, we propose the following framer system for \eqref{eq:equivalent_system}:
\begin{align}\label{eq:observer}
{\widehat{\mathcal{G}}}:\begin{cases} 
\underline{\xi}_{k+1}&\hspace{-.2cm}=(TA\hspace{-.1cm}-\hspace{-.1cm}LC)^\oplus \underline{\xi}_k\hspace{-.1cm}-\hspace{-.1cm}(TA-LC)^\ominus \overline{\xi}_k+\tilde y_k\\
&\hspace{-.2cm}+(TW)^\oplus\underline{\hat w}-(TW)^\ominus\overline{\hat w}+Z^\ominus\underline{\hat v}-Z^\oplus\overline{\hat v} ,\\
\overline{\xi}_{k+1}&\hspace{-.2cm}=(TA\hspace{-.1cm}-\hspace{-.1cm}LC)^\oplus \overline{\xi}_k\hspace{-.1cm}-\hspace{-.1cm}(TA-LC)^\ominus \underline{\xi}_k+\tilde y_k\\
&\hspace{-.2cm}+(TW)^\oplus\overline{\hat w}-(TW)^\ominus\underline{\hat w}+Z^\ominus\overline{\hat v}-Z^\oplus\underline{\hat v} ,\\
\underline{x}_{k}&\hspace{-.2cm}=\underline{\xi}_k+(NV)^{\ominus}\underline{\hat v}-(NV)^{\oplus}\overline{\hat v}+Ny_k ,\\
\overline{x}_{k}&\hspace{-.2cm}=\overline{\xi}_k+(NV)^{\ominus}\overline{\hat v}-(NV)^{\oplus}\underline{\hat v}+Ny_k ,\\
\underline{z}_{k}&\hspace{-.2cm}=\Gamma^{\oplus}\underline{x}_k-\Gamma^{\ominus}\overline{x}_k,\ \overline{z}_{k}=\Gamma^{\oplus}\overline{x}_k-\Gamma^{\ominus}\underline{x}_k,
\end{cases}
\end{align}
initialized at $[\overline{x}^\top_0,\underline{x}^\top_0]^\top$, where $Z \triangleq (T+(TA-LC)N)V$ and $\tilde{y}_k \triangleq (L+(TA-LC)N)y_k$. Here, $\underline{x}_k,\overline{x}_k,\underline{z}_k,\overline{z}_k$ can be
  interpreted as the ``outputs" of $\widehat{\mathcal{G}}$.}
The following proposition shows that \mkj{$\widehat{\mathcal{G}}$} \mmoh{indeed} constructs a framer for
\mmoh{$\mathcal{G}_p$} for all values of \mmoh{the observer gains}. \mmok{This allows us to reserve the choice
  of \mmoh{$L,N,T$} and $\alpha$ to satisfy both stability and privacy later (cf. Theorems \ref{lem:stability}--\ref{thm:synthesis}).}
\begin{prop}[Framer Property]\label{prop:framer}
  The state trajectory of system \eqref{eq:observer}, initialized at
  $[\overline{x}^\top_0,\underline{x}^\top_0]^\top$, frames the true
  state of \eqref{eq:system} at each time step $k$, i.e.,
  $\underline{x}_k \leq x_k \leq \overline{x}_k, \ \forall k \geq 0$. Moreover, $\forall k \geq 0, \underline{z}_k \leq z_k =\Gamma_k 
  \leq \overline{z}_k$, i.e., $z_k \in [\underline{z}_k,\overline{z}_k]$.
\end{prop}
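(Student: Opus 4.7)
The plan is to proceed by induction on $k$, with the key tool being the matrix-bounding inequality from Proposition~\ref{prop:bounding}. The base case is immediate from the initialization: since $[\overline{x}_0^\top,\underline{x}_0^\top]^\top$ is the observer's initial condition and $\underline{x}_0 \leq x_0 \leq \overline{x}_0$ holds by assumption on the unknown initial state of $\mathcal{G}$, the framing property is satisfied at $k=0$. For the output relation, applying Proposition~\ref{prop:bounding} with $A\leftarrow\Gamma$ and $x\leftarrow x_k$ directly to $z_k=\Gamma x_k$ yields $\Gamma^{\oplus}\underline{x}_k-\Gamma^{\ominus}\overline{x}_k\leq z_k\leq \Gamma^{\oplus}\overline{x}_k-\Gamma^{\ominus}\underline{x}_k$, which is precisely $\underline{z}_k\leq z_k\leq \overline{z}_k$; hence this part reduces to the state-framing claim.

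For the inductive step, I would assume $\underline{x}_k \leq x_k \leq \overline{x}_k$ and rewrite the true dynamics of $\mathcal{G}$ by inserting and subtracting $LC x_k$ and exploiting the perturbed measurement relation $y_k = Cx_k + V(v_k+v^a_k)$ where $v_k+v^a_k\in[\underline{v}',\overline{v}']$. This yields
\begin{align*}
x_{k+1} = (A-LC)x_k + L y_k + W w_k - LV(v_k+v^a_k).
\end{align*}
I would then bound each of the three $x_k$-, $w_k$-, and $(v_k+v^a_k)$-dependent terms separately using Proposition~\ref{prop:bounding}: applied with $A\leftarrow A-LC$ on the first term, with $A\leftarrow W$ on the second, and with $A\leftarrow -LV$ on the third. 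The only minor subtlety is handling the sign flip in the last term, for which I would use the identities $(-M)^{\oplus}=M^{\ominus}$ and $(-M)^{\ominus}=M^{\oplus}$ (which follow directly from $M=M^{\oplus}-M^{\ominus}$ with $M^{\oplus},M^{\ominus}\ge 0$). Summing the three resulting lower (respectively upper) bounds and adding the deterministic term $L y_k$ reproduces exactly the right-hand side defining $\underline{x}_{k+1}$ (respectively $\overline{x}_{k+1}$) in \eqref{eq:observer}, which closes the induction.

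Overall the proof is essentially mechanical once the decomposition of $x_{k+1}$ is in place, and the main (mild) obstacle is just bookkeeping: ensuring the sign conventions for $(\cdot)^{\oplus},(\cdot)^{\ominus}$ are handled correctly in the $-LV(v_k+v^a_k)$ term, and verifying that the bounds for the perturbed noise $v_k+v^a_k$ are indeed $[\underline{v}',\overline{v}']=[\alpha\underline{v},\alpha\overline{v}]$ as imposed on the injected perturbation. No hypothesis on the sign structure of $A-LC$, $W$, or $LV$ is needed because the decompositions $(\cdot)^{\oplus}$ and $(\cdot)^{\ominus}$ absorb arbitrary sign patterns, which is why the framer property holds for every $L$ and every $\alpha>0$, as claimed.
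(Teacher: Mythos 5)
Your proof is correct and follows essentially the same route as the paper: the paper likewise rewrites the dynamics as $x_{k+1}=(A-LC)x_k+Ly_k+Ww_k-LV(\cdot)$ by adding the identically zero term $L(y_k-Cx_k-V(\cdot))$ and then invokes Proposition~\ref{prop:bounding} on every matrix--uncertain-vector product, with the perturbed noise bounded in $[\underline{v}',\overline{v}']$. Your version merely makes explicit what the paper leaves implicit (the induction on $k$ and the sign bookkeeping $(-LV)^{\oplus}=(LV)^{\ominus}$, $(-LV)^{\ominus}=(LV)^{\oplus}$), so there is nothing to correct.
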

\begin{proof}
  Starting from \mmoh{\eqref{eq:equivalent_system} and applying Proposition \ref{prop:bounding} to all multiplications of matrices with
  uncertain vectors yields the dynamics of the auxiliary variable, i.e., the first two equations in \eqref{eq:observer}. Then, from \eqref{eq:aux_state} we have $ x_k=\xi_k+Ny_k-NV_k$, where applying Proposition \ref{prop:bounding} again results in the third and fourth equations in \eqref{eq:observer}. 
  Finally, the last two equations follow follow from applying Proposition \ref{prop:bounding} to $z_k=\Gamma x_k$}. 
  \end{proof}

\subsection{Observer Input-to-State Stability}
In this subsection, we formalize sufficient conditions to satisfy the
stability of the proposed observer~\eqref{eq:observer}. These conditions
simultaneously satisfy some bounds on the $\mathcal{H}_{\infty}$-norm
of the observer error dynamics, which is also required in the next
subsection where we provide guaranteed privacy-preserving conditions. \mmoh{We begin by computing a linear comparison system for the error dynamics.
\begin{lem}\label{lem:error_dynamics}
The observer error dynamics admits the folllwoing linear positive comparison system:
\begin{align}\label{eq:error}
\moo{\widetilde{\mathcal{G}}}: e^x_{k+1}=|{TA-LC}|e^x_k+\Lambda \delta_{\hat \lambda}, \ \text{where}
\end{align} 
\begin{align*}
  \Lambda \triangleq [|TW| \quad |LV|+|NV|], \ \delta_{\hat \lambda}=\alpha \delta_{\lambda}
  \triangleq \alpha[\delta^\top_w \ \delta^\top_{v}]^\top.
\end{align*}
\end{lem}
\begin{proof}
Starting
from 
%
\eqref{eq:observer}, and defining
\mmoh{$ \delta_{\hat s}=\alpha \delta_{s} \triangleq \alpha (\overline{s}-\underline{s}),\forall s
\in \{w,v\}$, as well as the observer errors
$e^r_k \triangleq \overline{r}_k-\underline{r}_k, r\in \{x,\xi, z\}$}, it is straightforward to obtain the following: 
\begin{align}
\label{eq:error_1} e^\xi_{k+1}&=|TA-LC|e^{\xi}_k+|TW|\delta_{\hat w}+|Z|\delta_{\hat v},\\
 \label{eq:error_2}e^x_k&=e^\xi_k+|NV|\delta_{\hat v},\\
 \label{eq:error_3}e^z_k&=|\Gamma|e^x_k.
\end{align}
Next, combining \eqref{eq:error_1} and \eqref{eq:error_2} yields the following dynamical system and its corresponding comparison system $\widetilde{\mathcal{G}}$ for the evolution of $e^x_k$: 
\begin{align}\label{eq:error_comparison}
\begin{array}{rl}
&e^x_{k+1}= |TA-LC|e^x_k+|TW|\delta_{\hat w}\\
&+(|(T\hspace{-.1cm}+\hspace{-.1cm}(TA\hspace{-.1cm}-\hspace{-.1cm}LC))NV|\hspace{-.1cm}+\hspace{-.1cm}|NV|\hspace{-.1cm}-\hspace{-.1cm}|TA\hspace{-.1cm}-\hspace{-.1cm}LC||NV|)\delta_{\hat v}\\
&\leq |TA-LC|e^x_k+|TW|\delta_{\hat w}+(|TV|+|NV|)\delta_{\hat v},
\end{array}
\end{align}
where the inequality follows from the sub-multiplicative property of the absolute value operator.
\end{proof}}
\mmoh{Note that the observer error dynamics and its comparison system $\widetilde{\mathcal{G}}$ are discrete-time positive systems by construction. Using this fact,} the following \mmoh{theorem} provides \mmoh{necessary and} sufficient conditions \mkj{in the form a mixed-integer semi-definite program (MISDP)} for the stability of the comparison system. \mmoh{Moreover, Theorem \ref{lem:stability} provides} an upper bound for the
$\mathcal{H}_{\infty}$-norm \mmoh{of $\widetilde{\mathcal{G}}$, which will be used later in Theorem \ref{thm:privacy} to derive sufficient conditions for privacy.} 
{
\begin{thm}[Observer Stability and Error Dynamics Upper Bound]\label{lem:stability}
  \mmoh{The error comparison system $\widetilde{\mathcal{G}}$ is ISS and satisfies 
   \begin{align}\label{eq:error_norm}
    \|\moo{\widetilde{\mathcal{G}}}\|_{\mathcal{H}_{\infty}} \triangleq \sup\limits_{k \geq 0} \frac{\|e^{x}_k\|_2}{\|\delta_{\hat \lambda}\|_2} \leq \mmoh{\gamma},
\end{align}
 if and only if there exists $(\gamma_*,P_*,\widetilde{L}_*,\widetilde{N}_*,\widetilde{T}_*)$ that solves the following MISDP: 
  \begin{align}\label{eq:stability}
\begin{bmatrix} P & |\widetilde{T}A\hspace{-.1cm}-\hspace{-.1cm}\widetilde{L}C| &\begin{bmatrix}|\widetilde{T}W| &  |\widetilde{L}V|\hspace{-.1cm}+\hspace{-.1cm}|\widetilde{N}V|\end{bmatrix} & 0 \\
                           * & P                                                   &  0                                                                                                     &  I \\
                           * & *                                                     & \gamma I                                                                                         & 0 \\
                           * & *                                                      & *                                                                                                      & P 
\end{bmatrix}\hspace{-.1cm} \succ \hspace{-.1cm}0, \tilde{T}\hspace{-.1cm}+\hspace{-.1cm}\tilde{N}C\hspace{-.1cm}=\hspace{-.1cm}P,
\end{align}
where 
  $P \in \mathbb{D}^{{n} \times {n}}_{>0}$, $\widetilde{T} \in \mathbb{R}^{n \times n}$, $\gamma > 0$, and
  ${\widetilde{L}},{\widetilde{N}} \in \mathbb{R}^{n \times m}$.} 
\end{thm}
\begin{proof}
First, note that by construction, $\widetilde{\mathcal{G}}$ is a positive discrete-time system. Hence, \mmoh{it follows from \cite[Corollary 4]{ebihara2014lmi} as well as applying Schur complement} that $\widetilde{\mathcal{G}}$ is stable and satisfies \eqref{eq:error_norm} \mmoh{if and only if} there exists a positive diagonal matrix $P$ such that  
\begin{align}\label{eq:stability_1}
\begin{bmatrix} P & P|{T}A\hspace{-.1cm}-\hspace{-.1cm}{L}C| &P\begin{bmatrix}|{T}W| &  |{L}V|\hspace{-.1cm}+\hspace{-.1cm}|{N}V|\end{bmatrix} & 0 \\
                           * & P                                                   &  0                                                                                                     &  I \\
                           * & *                                                     & \gamma I                                                                                         & 0 \\
                           * & *                                                      & *                                                                                                      & \gamma I 
\end{bmatrix} \succ 0.
\end{align}
Moreover, note that since $P$ is positive and diagonal, for any matrix $M$ we have
\mmoh{$(P|M|)_{ij}=P_{ii}|M|_{ij}=|P_{ii}M_{ij}|=(|PM|)_{ij} \Rightarrow P|M|=|PM|$. This, combined with \eqref{eq:stability_1} and defining $\tilde{X} \triangleq PX, \forall X \in \{L,T,N\}$ yields the results}.
\end{proof}
  \begin{rem}{\mmoh{Note that the mixed-integer nature of the program in \eqref{eq:stability} stems from
  the existence of the absolute values. The main advantage of taking an MISDP-based approach is that it does not require us to impose further constraints, as is the case for the SDP-based solutions. In other word, the MISDP solution is \emph{tight}. On the other hand, while the MISDP in \eqref{eq:stability} involves more computational complexity compared to a typical SDP, it remains tractable using off-the-shelf solvers such as CUTSDP. Additionally, it is important to note that \eqref{eq:MISDP}
 is solved in an offline manner, i.e., it only needs to be computed once for any given design. So, its computational demands are quite minimal in contrast to online design methods. Alternatively, via imposing supplementary linear constraints, the error dynamics can be upper bounded by another linear comparison system with no absolute values involved. Hence, \eqref{eq:MISDP} can be relaxed to an SDP with reduced computational complexity. However, this comes at the cost of introducing more conservatism and sacrificing optimality.}}
 \end{rem}
 
\subsection{Guaranteed Privacy-Preserving
  Mechanism\label{sec:private-mech}}
  \mmoh{In this section, we provide additional conditions to ensure the proposed observer \eqref{eq:observer} preserves the privacy of output in a guaranteed manner (cf. Definition \ref{defn:guar_privacy}). To formalize our results, we} introduce a set-valued deterministic mechanism
(i.e., mapping $\mathcal{M}$) that \mmoh{is needed to be weakly/strongly} guaranteed privacy-preserving in the sense of Definition
\ref{defn:guar_privacy}. Note that given the measurement signal
$y_k$ and by Proposition \ref{prop:framer}, the to-be-designed
mechanism $\mathcal{M}$ outputs set-valued estimates of the desired variable
$z_k=\Gamma x_k \in [\underline{z}_k,\overline{z}_k]$: 
\begin{align}\label{eq:mechanism}
\begin{array}{c}
\mathcal{M}(y_k)=\mathcal{Z}_k \triangleq [\underline{z}_k,\overline{z}_k], \text{where}\\
\underline{z}_k,\overline{z}_k \ \text{are outputted by the observer \eqref{eq:observer}},
\end{array}
\end{align}  
\mmoh{where $x_k$ is the state of the perturbed system $\mathcal{G}_p$.} The following provides sufficient conditions for $\mathcal{M}$ to be a weakly guaranteed privacy-preserving mechanism.
\begin{thm}[\mmoh{Weakly} Guaranteed Privacy-Preserving Mechanism]\label{thm:privacy}
  Consider the perturbed system $\mathcal{G}_p$ in \eqref{eq:system_perturbed}. Let
  $\rho, \epsilon > 0$, 
  $(\gamma,\mmoh{P,{\widetilde{L}},\widetilde{T},\widetilde{N}})$ be a solution to the MISDP in
  \eqref{eq:stability}, and \mmoh{$L=P^{-1}{\widetilde{L},T=P^{-1}\widetilde{T},N=P^{-1}\widetilde{N}}$}. Then, the
  mechanism $\mathcal{M}$ defined in \eqref{eq:mechanism} is \mmoh{weakly}
  $\epsilon$-guaranteed private w.r.t. $\moo{\adj_{\rho}}$ given in
  \eqref{eq:adj},
  if 
  for some $\eta,\alpha >0$, the following matrix inequalities hold:
  \mmoh{
\begin{align}\label{eq:privacy}
  &\begin{bmatrix} P & \widetilde{T}A-{\widetilde{L}}C & [\widetilde{L} \ \widetilde{N}] & 0 \\
     * & P                                                   &  0                                                                                                     &  I \\
     * & *                                                     &  \eta I                                                                                         & 0 \\
     * & *                                                      & *                                                                                                      & P 
   \end{bmatrix} \hspace{-.1cm}\succ 0,\\
  \label{eq:new_inequality} &  \sigma_{m}(|\Gamma|)\gamma \alpha\|\delta_{\lambda}\|_2+2\sigma_m(\Gamma)\eta\rho \leq e^{\epsilon},
 \end{align}}
 where $\delta_{\lambda}\triangleq [\delta^\top_w \ \delta^\top_{v}]^\top$.
\end{thm}
\begin{proof}
  Consider two arbitrary
  $z_k \in
  \mathcal{M}(y_k)=\mathcal{Z}_k=[\underline{z}_k,\overline{z}_k]$ and
  $z_k' \in
  \mathcal{M}(y_k')=\mathcal{Z}'_k=[\underline{z}'_k,\overline{z}'_k]$,
  where $y_k,y_k' \in \mathcal{Y}$ such
  that 
  $\adj_\rho (y_k,y_k')$,
  i.e., 
  $\|y_k-y'_k\|_2 \leq \rho$.
  Defining
  $e^z_k \triangleq \overline{z}_k-\underline{z}_k, e^{z'}_k
  \triangleq \overline{z}'_k-\underline{z}'_k$ and using the triangle
  inequality we obtain:
  \begin{align}\label{eq:deltaz}
    \begin{array}{rl}
      \|z_k-z'_k\|_2&\hspace{-.2cm}=\|(z_k\hspace{-.1cm}-c^z_k)+(c^{z'}_k-z'_k)+(c^{z'}_k-c^{z}_k)\|_2\\
                    & \hspace{-.2cm}\leq \frac{1}{2}(\|e^z_k\|_2+\|e^{z'}_k\|_2)+\|\delta^{c_z}_k\|_2,
    \end{array}
  \end{align}
  where $\delta^{c_z}_k \triangleq c^z_k -c^{z'}_k$. Moreover, \mmoh{let}
  $c^{\nu}_k \triangleq \frac{1}{2}
  (\underline{\nu}_k+\overline{\nu}_k), \forall \nu \in
  \{z,z',x,x'\}$, \mmoh{be} the centers of the intervals
  $\mathcal{Z}_k,\mathcal{Z}'_k, \mathcal{X}_k\triangleq
  [\underline{x}_k,\overline{x}_k],\mathcal{X}'_k\triangleq
  [\underline{x}'_k,\overline{x}'_k]$, where
  $\mathcal{X}_k,\mathcal{X}'_k$ are the interval-valued state
  estimates returned by the proposed observer \eqref{eq:observer}
  using $y_k,y'_k$, respectively. Further, from \mmoh{\eqref{eq:observer} and \eqref{eq:error}}
  we have
  \begin{align}\label{eq:erro_eq}
    e^z_k=e^{z'}_k=|\Gamma|e^x_k,
  \end{align}
  where 
  the first equality follows from the facts that the error dynamics \eqref{eq:error} is not affected by the
  output signal $y_k$ and \mkj{that} the initial error
  $e^x_0=e^{x'}_0=\overline{x}_0-\underline{x}_0$ is a public
  information. 
    Combining
  \eqref{eq:deltaz} and \eqref{eq:erro_eq} returns
  \begin{align}\label{eq:deltaz2}
    \|z_k-z'_k\|_2 \leq \sigma_{m}(|\Gamma|) \|e^x_k\|_2+\|\delta^{c_z}_k\|_2.
  \end{align}
    Furthermore, from \eqref{eq:mechanism}, the definition of
  the interval center, and 
  $\Gamma^{\oplus}-\Gamma^{\ominus}=\Gamma$, we \mmoh{obtain}
  $c^z_k =\Gamma c^x_k, c^{z'}_k =\Gamma c^{x'}_k$. Hence,
 \begin{align}\label{eq:norm_e_xzt}
   \|\delta^{c_z}_k\|_2 \hspace{-.1cm}\triangleq \hspace{-.1cm} \|c^z_k\hspace{-.1cm} -\hspace{-.1cm}c^{z'}_k\|_2\hspace{-.1cm}=\hspace{-.1cm}\|\Gamma (c^x_k\hspace{-.1cm}-\hspace{-.1cm}c^{x'}_k)\|_2 \le
   \sigma_{m}(\Gamma) \|\delta^{c_x}_k\|_2,
 \end{align}
 where $\delta^{c_x}_k \triangleq c^x_k-c^{x'}_k$.
 This in addition to \eqref{eq:deltaz2} yields
  \begin{align}\label{eq:norm_e_xz}
\|z_k-z'_k\|_2  \leq \sigma_{m}(|\Gamma|) \|e^x_k\|_2+\sigma_{m}(\Gamma) \|\delta^{c_x}_k\|_2.
 \end{align}
 \mmoh{On the hand, it follows from} \eqref{eq:error_norm} and Lemma \ref{lem:stability} that:
 \begin{align}\label{eq:interm1}
 \|e^x_k\|_2 \leq \gamma \|\delta_{\lambda}\|_2. 
 \end{align}
 \mmoh{Combining  \eqref{eq:norm_e_xz} and \eqref{eq:interm1} returns:}
   \begin{align}\label{eq:norm_e_xz_2}
 \|z_k-z'_k\|_2  \leq \sigma_{m}(|\Gamma|) \gamma \|\delta_{\lambda}\|_2+\sigma_{m}(\Gamma) \|\delta^{c_x}_k\|_2.
 \end{align}
  \mmoh{Furthermore, from the first two equations in} \eqref{eq:observer}, we \mmoh{derive: 
  \begin{align}\label{eq:help1}
  \begin{array}{rl}
  \delta^{c_{\xi}}_{k+1}\hspace{-.2cm}&=\frac{1}{2}(\overline{\xi}_{k+1}+\underline{\xi}_{k+1}-(\overline{\xi}'_{k+1}+\underline{\xi}'_{k+1}))\\
  \hspace{-.2cm}&=\frac{1}{2}(\overline{\xi}_{k+1}-\overline{\xi}'_{k+1}+(\underline{\xi}_{k+1}-\underline{\xi}'_{k+1}))\\
  \hspace{-.2cm}&=\frac{1}{2}(TA\hspace{-.1cm}-\hspace{-.1cm}LC)(\overline{\xi}_{k}\hspace{-.1cm}+\hspace{-.1cm}\underline{\xi}_{k}-(\overline{\xi}'_{k}-\underline{\xi}'_{k}))+\tilde{y}_k-\tilde{y}'_k\\
  \hspace{-.2cm}&=(TA\hspace{-.1cm}-\hspace{-.1cm}LC)\delta^{c_{\xi}}_{k}\hspace{-.1cm}+\hspace{-.1cm}(L\hspace{-.1cm}+\hspace{-.1cm}(TA\hspace{-.1cm}-\hspace{-.1cm}LC)N)(y_k\hspace{-.1cm}-\hspace{-.1cm}y'_k).
  \end{array}
  \end{align} 
  Similarly, the two equations in \eqref{eq:observer} result in
  \begin{align}\label{eq:help2}
   \delta^{c_{x}}_{k}\hspace{-.1cm}= \delta^{c_{\xi}}_{k}+N(y_k-y'_k) \Leftrightarrow N(y_k-y'_k)= \delta^{c_{x}}_{k}-\delta^{c_{\xi}}_{k}.
  \end{align}
  Plugging the term $N(y_k-y'_k)$ from \eqref{eq:help2} into \eqref{eq:help1} yields:
 \begin{align}\label{eq:interm2}
 \delta^{c_x}_{k+1}\hspace{-.1cm}=\hspace{-.1cm}(A-LC)\delta^{c_x}_{k}\hspace{-.1cm}+\hspace{-.1cm}L(y_k\hspace{-.1cm}-\hspace{-.1cm}y'_k)\hspace{-.1cm}+\hspace{-.1cm}N(y_{k+1}-y'_{k+1}).
 \end{align}
 }
 By \mmoh{\cite[Lemma 2]{de2002extended}}, \eqref{eq:privacy} implies that
 the system in \eqref{eq:interm2} is stable and:
 \mmoh{
 \begin{align}\label{eq:delta_c_x_bounding}
 \|\delta^{c_x}_{k}\|_2 \leq \eta \| [(y_k-y'_k)^\top  (y_{k+1}-y'_{k+1})^\top]^\top \|_2 \leq 2\eta \rho,
 \end{align}
 }
\noindent where the factor $2$ in the right hand side comes from considering $y_k-y'_k$ and $y_{k+1}-y'_{k+1}$ as bounded-norm inputs to the system in \eqref{eq:delta_c_x_bounding}, which combined with \eqref{eq:new_inequality} and \eqref{eq:norm_e_xz_2} returns $\|z_k-z'_k\|_2  \leq \sigma_{m}(|\Gamma|) \gamma \|\delta_{\lambda}\|_2+2\sigma_{m}(\Gamma) \eta \rho \leq e^{\epsilon}$. 
\end{proof}
\mmoh{Theorem \ref{thm:privacy} provides sufficient conditions for weak privacy of the proposed interval observer, by guaranteeing that the interval-valued estimates of $z_k$ corresponding to two adjacent output values $y_k$ and $y'_k$, i.e., $\mathcal{Z}_k$ and $\mathcal{Z}'_k$, are close enough. Next we provide sufficient conditions for strong privacy to ensure that $\mathcal{Z}_k$ and $\mathcal{Z}'_k$ have a large enough intersection, which combined with the conditions in Theorem \ref{thm:privacy} implies that the interval observer design is strongly private.} 
 \mmoh{
 \begin{thm}[Strong Guaranteed Privacy]\label{thm:strong_privacy}
 Consider the perturbed system $\mathcal{G}_p$ and let $\Gamma$ has $n_z$ rows, and $\rho, \epsilon,\delta >0 $. Suppose the conditions in Theorems \ref{lem:stability} and 
\ref{thm:privacy} hold, and consequently, $(\gamma,\mmoh{P,{\widetilde{L}},\widetilde{T},\widetilde{N}})$ is a solution to the MISDP in \eqref{eq:stability}, \eqref{eq:privacy} and \eqref{eq:new_inequality}. Let $L=P^{-1}{\widetilde{L},T=P^{-1}\widetilde{T},N=P^{-1}\widetilde{N}}$. Then, there exists a time step $K \in \mathbb{N}$ such that for $k \geq K$ the mechanism $\mathcal{M}$ defined in \eqref{eq:mechanism} is \mmoh{strongly}
  $(\epsilon,\delta)$-guaranteed private w.r.t. $\moo{\adj_{\rho}}$ given in
  \eqref{eq:adj}, if 
  \begin{align}\label{eq:intersect}
  \alpha\||\Gamma|\Omega\delta_{\lambda}\|_{\infty} \geq \frac{1}{\sqrt{n_{z}}}\min (2 \sigma_{m}(\Gamma)\eta\rho,e^{\epsilon})+\delta,
  \end{align}
  where $\Omega \triangleq (I-(TA-LC))^{-1}[|TW| \quad |LV|+|NV|]$.
 \end{thm}
 \begin{proof}
 Since the mechanism is already weakly $\epsilon$-guaranteed private by Theorem \ref{thm:privacy}, all we need to prove is that \eqref{eq:intersect} implies:
 \begin{align*}
  \ttdiam(\mathcal{Z}_k \cap \mathcal{Z}'_k) \geq \delta \quad \forall k \geq K,
  \end{align*}
  which is equivalent to: $\forall k \geq K$,
  \begin{align}\label{eq:inters1}
  \exists i \in \{1,\dots,n_z \}, \ s.t. \ \frac{1}{2}(e^z_{k,i}+e^{z'}_{k,i})=e^z_i \hspace{-.1cm}\geq\hspace{-.1cm} |\delta^{c_z}_{k,i}|\hspace{-.1cm}+\hspace{-.1cm}\delta,
  \end{align}
  where the equality holds since $e^z=e^{z'}$. Given $|e^z_{k,i}|=e^z_{k,i}$, a sufficient condition for \eqref{eq:inters1} is $\forall k \geq K$,
   \begin{align*}
 \exists i \in \{1,\dots,n_z \}, \ s.t. \ \max_{i}|e^z_{k,i}|=\|e^z_k\|_{\infty} \geq |\delta^{c_z}_{k,i}|+\delta,
  \end{align*}  
  or, equivalently: $\forall k \geq K$,
   \begin{align}\label{eq:inters2}
  \|e^z_k\|_{\infty} \geq \max_{i}|\delta^{c_z}_{k,i}|+\delta=\|\delta^{c_z}_k\|_{\infty}+\delta.
  \end{align}  
  On the other hand, it follows from \eqref{eq:norm_e_xzt} and \eqref{eq:delta_c_x_bounding} that:  
 $\|\delta^{c_z}_k\|_2 \leq  \sigma_{m}(\Gamma) \|\delta^{c_x}_k\|_2 \leq 2\sigma_{m}(\Gamma) \eta \rho$, 
  while by weak privacy: $\|\delta^{c_z}_k\|_2 \leq e^\epsilon$. Hence, $\|\delta^{c_z}_k\|_2 \leq \min (2\sigma_{m}(\Gamma) \eta \rho,e^\epsilon)$ implying $\|\delta^{c_z}_k\|_{\infty} \leq \frac{1}{\sqrt{n_z}} \min (2\sigma_{m}(\Gamma) \eta \rho,e^\epsilon)$. So, for \eqref{eq:inters2} to hold, it suffices that $\forall k \geq K$,
   \begin{align}\label{eq:inters3}
  \|e^z_k\|_{\infty}=\||\Gamma|e^x_k\|_{\infty} \geq \frac{1}{\sqrt{n_z}} \min (2\sigma_{m}(\Gamma) \eta \rho,e^\epsilon)+\delta.
  \end{align} 
  If we can show that the above inequality holds for the steady state error, i.e., $e^x_{ss}$, then the fact that $\lim_{k \to \infty}e^x_{k}=e^x_{ss}$ together with the continuity of the $\lim$ operator implies \eqref{eq:inters3} for $k$ greater than some sufficiently large $K$. To do this, solving \eqref{eq:error} using simple induction returns:
  \begin{align*}
\begin{array}{rl}
  e^{x}_{k}
  &=\widetilde{A}^ke^x_0+(\sum_{i=1}^k\widetilde{A}^{k-i})\Lambda
    \delta_{\hat{\lambda}}\\
    &=\hspace{-.1cm}\widetilde{A}^ke^x_0\hspace{-.1cm}+\hspace{-.1cm}(I\hspace{-.1cm}-\hspace{-.1cm}\widetilde{A})^{-1}(I-\widetilde{A}^k)\Lambda
    \delta_{\hat{\lambda}} \xrightarrow{k \to \infty} \overbrace{(I-\widetilde{A})^{-1}\Lambda
    \delta_{\hat{\lambda}}}^{e^x_{ss}}.
\end{array}
\end{align*}
where $\widetilde{A} \triangleq |TA-LC|$, $\Lambda$ and $\delta_{\hat{\lambda}}$ are given in \eqref{eq:error}, and the limit holds since the comparison system \eqref{eq:error_comparison} is stable by Theorem \ref{lem:stability}. Plugging $e^k_{ss}$ into \eqref{eq:inters3} returns the results. 
 \end{proof}
 }
\subsection{$\mathcal{H}_{\infty}$-Optimal Guaranteed Private Observer
  Synthesis}\label{sec:private_observer}
Finally, combining the results of \mmoh{Theorems~\ref{lem:stability}--\ref{thm:strong_privacy}} enables us to find \mmoh{observer} gains and perturbation factors \mmoh{$\alpha$ and $\delta$} that
simultaneously stabilize the observer, \mkj{satisfy} the guaranteed privacy
and minimize the $\mathcal{H}_{\infty}$-norm of the error
system. \mmoh{To do that, for given $\rho$ and $\epsilon$, we first design a weakly $\epsilon$-guaranteed mechanism $\mathcal{M}$ \mkj{w.r.t.}
$\moo{\adj_{\rho}}$, using the results in Theorems~\ref{lem:stability} and~\ref{thm:privacy}. Then, leveraging \eqref{eq:intersect} and Theorem \ref{thm:strong_privacy}, we find the largest $\delta$ with which $\mathcal{M}$ is strongly $(\epsilon,\delta)$-guaranteed private \mkj{w.r.t.}
$\moo{\adj_{\rho}}$, if possible. Otherwise, we reduce the adjacency distance $\rho$ to $\tilde\rho \in (0,\rho)$ until strong $(\epsilon,\tilde\delta)$- privacy is obtained w.r.t.~$\moo{\adj_{\tilde\rho}}$ with some $\tilde{\delta} >0$. The following theorem summarizes this process showing that it is always feasible to find such $\tilde\rho$ and $\tilde{\delta}$.}
\mmoh{
\begin{thm} [$\mathcal{H}_{\infty}$-Optimal Guaranteed Private
  Observer Synthesis]\label{thm:synthesis}
  Consider system \mkj{$\mathcal{G}$} and the corresponding perturbed system $\mathcal{G}_p$.
  Let
  $\rho, \epsilon >0$ and suppose
  $(\gamma_*,\beta_*,\eta_*,\mmoh{P_*},{\widetilde{L}}_*,{\widetilde{T}}_*,{\widetilde{N}}_*)$ be a solution to the following: 
\begin{align}\label{eq:MISDP}
  &\min\limits_{\{\gamma>0,\beta>0,\eta>0,P\in \mathbb{D}^{\moo{n}\times\moo{n}}_{>0},{\widetilde{L}},{\widetilde{N}},{\widetilde{T}}\}}\gamma\\
  \nonumber&\quad \quad \quad \quad   \st \eqref{eq:stability}, \eqref{eq:privacy}, \ \sigma_{m}(|\Gamma|)\beta\|\delta_{\lambda}\|_2+2\sigma_m(\Gamma)\eta\rho \leq e^{\epsilon}.
\end{align}
Then, the proposed observer \eqref{eq:observer} with the gains
$L_*=P_*^{-1}\moo{\widetilde{L}}_*,T_*=P_*^{-1}\moo{\widetilde{T}}_*,N_*=P_*^{-1}\moo{\widetilde{N}}_*$ and the perturbation factor $\alpha_*=\frac{\beta_*}{\gamma_*}$ is
$\mathcal{H}_{\infty}$-optimal in the sense of Definition
\ref{defn:opt_obs}, with the optimal noise attenuation level
$\gamma^*$. Moreover, the mechanism $\mathcal{M}$ defined in
\eqref{eq:mechanism} is weakly $\epsilon$-guaranteed private \mkj{w.r.t.}
$\moo{\adj_{\rho}}$ given in \eqref{eq:adj}. Furthermore, the following hold.
\begin{enumerate}[(i)]
\item\label{item:strong_1} If $\alpha_*\||\Gamma|\Omega_*\delta_{\lambda}\|_{\infty} > \frac{1}{\sqrt{n_{z}}}\min (2 \sigma_{m}(\Gamma)\eta_*\rho,e^{\epsilon})$, then there exists a time step $K \in \mathbb{N}$ such that for $k \geq K$, $\mathcal{M}$ is strongly $(\epsilon, \delta)$-guaranteed private \mkj{w.r.t.} $\moo{\adj_{\rho}}$ with $\delta=\alpha_*\||\Gamma|\Omega_*\delta_{\lambda}\|_{\infty} - \frac{1}{\sqrt{n_{z}}}\min (2 \sigma_{m}(\Gamma)\eta_*\rho,e^{\epsilon})$, where $\Omega_* \triangleq (I-(T_*A-L_*C))^{-1}[|T_*W| \quad |L_*V|+|N_*V|]$.
\item\label{item:strong_2} If $\alpha_*\||\Gamma|\Omega_*\delta_{\lambda}\|_{\infty} \leq \frac{1}{\sqrt{n_{z}}}\min (2 \sigma_{m}(\Gamma)\eta_*\rho,e^{\epsilon})$, then there exists a time step $\tilde{K} \in \mathbb{N}$ such that for $k \geq \tilde{K}$, 
$\mathcal{M}$ is strongly $(\epsilon, \tilde\delta)$-guaranteed private \mkj{w.r.t.} $\moo{\adj_{\tilde \rho}}$ with any $\tilde{\rho} \in (0,\frac{\sqrt{n_z}\alpha_*\||\Gamma|\Omega_*\delta_{\lambda}\|_{\infty}}{2\sigma_m(\Gamma)\eta_*}]$ and $\tilde\delta=\alpha_*\||\Gamma|\Omega_*\delta_{\lambda}\|_{\infty} - \frac{2 \sigma_{m}(\Gamma)\eta_*\tilde\rho}{\sqrt{n_{z}}}$.
\end{enumerate}
\end{thm}
\begin{proof}
  The proof of weak privacy directly follows from Theorems~\ref{lem:stability} and~\ref{thm:privacy}, and defining the new decision variable
  $\beta=\gamma\alpha$. The statement in \eqref{item:strong_1} holds by \eqref{eq:intersect} and Theorem \ref{thm:strong_privacy}. Finally, the statement in \eqref{item:strong_2} follows from the facts that the right hand side of \eqref{eq:intersect} is decreasing in $\rho$. So, if \eqref{eq:intersect} does not hold with a given $\rho$, one can reduce $\rho$ as much as needed to $\tilde{\rho}$ until \eqref{eq:intersect} holds. In this case, it is easy to see the largest $\tilde{\rho}$ that satisfies \eqref{eq:intersect} is $\rho_m=\frac{\sqrt{n_z}\alpha_*\||\Gamma|\Omega_*\delta_{\lambda}\|_{\infty}}{2\sigma_m(\Gamma)\eta_*}$, i.e., \eqref{eq:intersect} holds for any $\tilde{\rho} \in (0,\rho_m]$ and the corresponding $\tilde{\delta}$.
\end{proof}
}
\section{Accuracy Analysis}
As a consequence of introducing perturbations to ensure guaranteed
privacy, the estimates of $x$ and $z$ incur into an accuracy loss.  In
this section, we 
quantify the difference between the interval estimate widths, i.e., the
observer errors, with and without considering guaranteed privacy.
 
First, it is straightforward to see that in the absence of privacy
considerations, a non-private (NP) $\mathcal{H}_{\infty}$-optimal
interval observer can be designed by implementing~\eqref{eq:observer}
without any additional perturbation noise, i.e, with $\alpha^{\NP}=1$,
and with observer gains $L^{\NP}=(P^{\NP}_*)^{-1}\moo{\widetilde{L}}^{\NP}_*,T^{\NP}=(P^{\NP}_*)^{-1}\moo{\widetilde{T}}^{\NP}_*,N^{\NP}=(P^{\NP}_*)^{-1}\moo{\widetilde{N}}^{\NP}_*$,
where $(P^{\NP}_*,\moo{\widetilde{L}}^{\NP}_*,\moo{\widetilde{T}}^{\NP}_*,\moo{\widetilde{N}}^{\NP}_*,\gamma^{\NP}_*)$ is a solution to
the following MISDP:
\begin{align}\label{eq:nonprivate}
\begin{array}{rl}
  &\min\limits_{\gamma >0, P \in \mathbb{D}^{\moo{n}\times\moo{n}}_{>0},\mmoh{\widetilde{L}},\mmoh{\widetilde{T}},\mmoh{\widetilde{N}}} \gamma\\
  &\quad \quad \ \st \, \text{\sm{matrix inequality} } \eqref{eq:stability} \text{ \sm{holds}}.
\end{array}
\end{align}
Then, the corresponding NP state framers are
$\underline{x}^{\NP},\overline{x}^{\NP}$.

On the other hand, in the presence of privacy,
  the
proposed observer \eqref{eq:observer} outputs the guaranteed-private
(GP) framers
$\underline{x}_k=\underline{x}^{\GP}_k,\overline{x}_k=\overline{x}^{\GP}_k$,
with $\alpha=\frac{\beta^{\GP}_*}{\gamma^{\GP}_*}$ and
$X^{\GP}=(P^{\GP}_*)^{-1}\moo{\widetilde{X}}^{\GP}_*, X \in \{L,T,N\}$, where
$(\eta^{\GP}_*,\beta^{\GP}_*,P^{\GP}_*,\moo{\widetilde{L}}^{\GP}_*,\moo{\widetilde{T}}^{\GP}_*,\moo{\widetilde{N}}^{\GP}_*,\gamma^{\GP}_*)$
is a solution to \eqref{eq:MISDP} (cf. Section
\ref{sec:private_observer}).

Consequently, for $s \in \{\NP,\GP\}$, the following intervals
\begin{align*}
  \begin{array}{c}
    \mathcal{Z}^{s}_k=[\underline{z}^{s}_k,\overline{z}^{s}_k],\\
    \underline{z}^{s}_k
    =\Gamma^{\oplus}\underline{x}^{s}_k-
    \Gamma^{\ominus}\overline{x}^{s}_k, \  \overline{z}^{s}_k=
    \Gamma^{\oplus}\overline{x}^{s}_k-\Gamma^{\ominus}\underline{x}^{s}_k
  \end{array}
\end{align*}
are the corresponding non-private (if $s=\NP$) and guaranteed-private
(if $s=\GP$) interval estimates of $z_k=\Gamma x_k$,
respectively. This being said, we are interested in quantifying an
upper bound for the accuracy error:
 \begin{align}\label{eq:acc_error}
   \varepsilon^p_k\triangleq \hspace{-.1cm} \|e^{z,\NP}_k-e^{z,\GP}_k\|_{\infty}\hspace{-.1cm}=\hspace{-.1cm}\|(\overline{z}^{\GP}_k-\underline{z}^{\GP}_k)\hspace{-.1cm}-\hspace{-.1cm}(\overline{z}^{\NP}_k\hspace{-.1cm}-\hspace{-.1cm}\underline{z}^{\NP}_k)\hspace{-.05cm}\|_{\infty}, 
 \end{align}
 which measures the conservatism incurred by considering guaranteed
 privacy. The following lemma formalizes this by providing a
 closed-from quantification of the accuracy error and its steady
 state value.
\begin{lem}\label{lem:accuracy}
  Consider \mkj{system $\mathcal{G}$} and the proposed guaranteed
  private $\mathcal{H}_{\infty}$-optimal observer
  \eqref{eq:observer}. Suppose all the assumptions in Theorem
  \ref{thm:synthesis} hold. Then, the accuracy error $\varepsilon^p_k$
  (cf. \eqref{eq:acc_error}) and its steady state value
  $\varepsilon^p_{\infty}$ can be computed as: 
\moo{
\begin{align*}
\begin{array}{rl}
 \varepsilon^p_k&=\| |\Gamma| (\widetilde{\Delta}_ke^x_0+\Theta_k{\Lambda}\delta_{{\lambda}}) \|_{\infty},\\
 \varepsilon^p_{\infty}&=\| |\Gamma| (({I-\subscr{\widetilde{A}}{NP}})^{-1}-\alpha({I-\subscr{\tilde{A}}{GP}})^{-1}){\Lambda}\delta_{{\lambda}} \|_{\infty},
 \end{array}
\end{align*}
where \sm{recall that} $e^x_0 \triangleq \overline{x}_0-\underline{x}_0$, $\widetilde{\Delta}_k\triangleq \subscr{\widetilde{A}}{NP}^k-\subscr{\widetilde{A}}{NP}^k$, and
\begin{align*}
\Theta_k \triangleq (({I-\subscr{\widetilde{A}}{NP}})^{-1}({I-\subscr{\widetilde{A}}{NP}^k})-\alpha({I-\subscr{\tilde{A}}{GP}})^{-1}({I-\subscr{\widetilde{A}}{GP}^k})),
\end{align*}
  \sm{with} $ \subscr{\widetilde{A}}{s} \triangleq |TA-\supscr{L}{s}C|$,
\sm{for} $ {s} \in \{\rm{NP},\rm{GP}\}$,
${\Lambda} \triangleq  [|TW| \quad |LV|+|NV|]$, and
$\delta_{{\lambda}} \triangleq [\delta_{w}^\top \
\delta_{v}^\top]^\top$.}
\end{lem}
\begin{proof}
  Defining $e^{x,s}_k \triangleq \overline{x}^s_k- \underline{x}^s_k,$
  \sm{for} $ s \in \{\NP,\GP\}$, and computing the error dynamics of
  the observer \eqref{eq:observer} for each of the two NP and GP cases
  yields:
  \moo{
\begin{align}\label{eq:NP_GP_error}
  e^{x,s}_{k+1}=\widetilde{A}_se^{x,s}_{k}+\Lambda_s\delta_{\lambda}, \quad s \in \{\NP,\GP\}. 
\end{align}}
Solving \eqref{eq:NP_GP_error} using simple induction returns:
\begin{align*}
\begin{array}{rl}
  e^{x,s}_{k}
  &=\widetilde{A}^k_se^x_0+(\sum_{i=1}^k\widetilde{A}_s^{k-i})\Lambda_s
    \delta_{{\lambda}}\\ 
  &=\widetilde{A}^k_se^x_0+({I-\widetilde{A}_s})^{-1}({I-\widetilde{A}^k_s})
   \Lambda_s
    \delta_{{\lambda}}, \quad s \in \{\NP,\GP\}. 
\end{array}
\end{align*}
The closed-form expression for the accuracy error follows from this
and the fact that
$e^{z,s}_k \triangleq |\Gamma| e^{x,s}_k, \forall s \in
\{\NP,\GP\}$. Finally, $\varepsilon^p_{\infty}\triangleq \lim_{k \to \infty}\varepsilon^p_k$ is computed by taking
the limit of $\varepsilon^p_k$ when time goes to infinity and given
that
\moo{$\lim_{k\to \infty} \widetilde{A}^k_s=\mathbf{0}_{\moo{n}\times\moo{n}},
\forall s \in \{\NP,\GP\}$} by the observer stability.  
\end{proof}
  \section{Illustrative Example}\label{sec:example}
  To illustrate the effectiveness of our approach, we consider a
  slightly modified version of a dynamic market, \mkj{i.e., a L\"{e}ontief model,} with $N=5$ firms that
  supply the same
  product~\cite{zhang2013adaptive,degue2020differentially},
    where a linear system models the production
  dynamics of each firm, which is affected by its neighbor firms: $  x^i_{k+1}=(1-a)x^i_k+w^i_k+\frac{a}{|\mathcal{N}_i|}\textstyle{\sum}_{j \in \mathcal{N}_i}x^j_k, \quad k
    \in \mathbb{K}.$ 
  Here, $x^i_k$ and $\mathcal{N}_i$ represent the production output
  of the firm~$i$ and the set of its neighbors, respectively. Moreover,
  $w^i_k \in [-0.5,0.5]$ is assumed to be an individual firm
  production disturbance. Furthermore, similar to
  \cite{zhang2013adaptive}, we set $a=0.16$,
  $\mathcal{N}_i=\{i+1\}, \forall i \in \{1,\dots,N-1\}$ and
  $\mathcal{N}_N=\{1\}$. {In other words, the firm \emph{influence graph} is a ring graph where all connected nodes are labeled by consecutive
    integers.}. 
    Each firm $i$, has an uncertain initial production
  $x^i_0 \in [185,215]$ and shares a noisy measurement of its highly
  sensitive production output $y^i_k=x^i_k+v^i_k$, where
  $v^i_k \in [0,1]$. \mkj{So}, we obtain global state and
  measurement equations in the form of \eqref{eq:system}, with
  $\overline{x}_0=215\mathbf{1}_5,\underline{x}_0=185\mathbf{1}_5,
  \overline{w}_0=-\underline{w}_0=0.5\mathbf{1}_5,\overline{v}_0=\mathbf{0}_5,\underline{v}_0=\mathbf{1}_5,V=W=I$, {$A_{ii}=1-a$, $ A_{i,i+1}=a$, and the rest of the element of $A$ are zero.}
  Furthermore, $\Gamma=\mathbf{1}^\top_5$, i.e., there is a data
  aggregator planning to release the total production output
  $z_k=\sum_{i=1}^5x^i_k$ at each time step (day) from the data of the
  $N=5$ firms, in a way that the \mmoh{strong} $(\epsilon,\delta)$-guaranteed
  privacy of the data is preserved \mkj{w.r.t.} $\moo{\adj_{\rho}}$,
  with $\epsilon=\ln(3),\delta=.1$ and $\rho=1$. This
    global production can be of value to a decision maker to make
    future investments. However, in releasing this, an operator would
    like to keep each firm's information private. By applying Theorem
  \ref{thm:synthesis} and solving the MISDP in \eqref{eq:MISDP} using
  the solver CUTSDP \mkj{implemented in} YALMIP \cite{Lofberg2004}, we
  obtain the \mmoh{corresponding noise perturbation factor
    $\alpha_*=1.273$, the noise attenuation level $\gamma_*=0.754$,
    the observer gains $(L_*)_{ii}=0.314,(T_*)_{ii}=1.205,(N_*)_{ii}=-0.205,(L_*)_{i,i+1}=0.053,(T_*)_{i,i+1}=-(N_*)_{i,i+1}=0.107$, and the
    rest of the elements of $L_*$ and $N_*$ all $-.004$ and $0.006$, respectively.}
The red plots in Figure \ref{fig:privacy_estimates}
indicate the guaranteed private (GP) upper and lower framers (left) and the estimate interval widths (right) \mkj{of
$z_k$}, obtained by observer \mkj{$\widetilde{\mathcal{G}}$} with the
computed $L_*$ and $\alpha_*$. As can be seen, the plotted framers
contain the actual state trajectory (the green plot). Moreover, as
expected, the non-private (NP) interval estimates (black plots) are
tighter than the GP ones due to the additional required guaranteed
privacy-preserving constraints \eqref{eq:privacy} and
\eqref{eq:new_inequality}, as well as the additional perturbation
noise. \mmoh{Furthermore}, for the sake of comparison, we implemented a slightly
modified version of the differentially private (DP) interval observer
in \cite{degue2020differentially}, using our computed gain \mmoh{${L}_*,T_*,N_*$}, by
perturbing the input data $y_k$ with additional stochastic noise as
described in \cite{degue2020differentially}. As Figure
\ref{fig:privacy_estimates} shows, the GP interval estimates (red)
are tighter the DP ones (blue). \mmoh{Finally, Figure \ref{fig:intersect} shows the interval-valued estimates corresponding to two adjacent output signals $y^1_k$ and $y^2_k$. As can be seen, all the intervals have non-empty intersection and are hard to be distinguished which illustrates the strong privacy-preserving feature of the designed mechanism.}
\begin{figure}[t!] 
\centering
{\includegraphics[width=\columnwidth,trim=5mm -5mm 10mm 0mm]{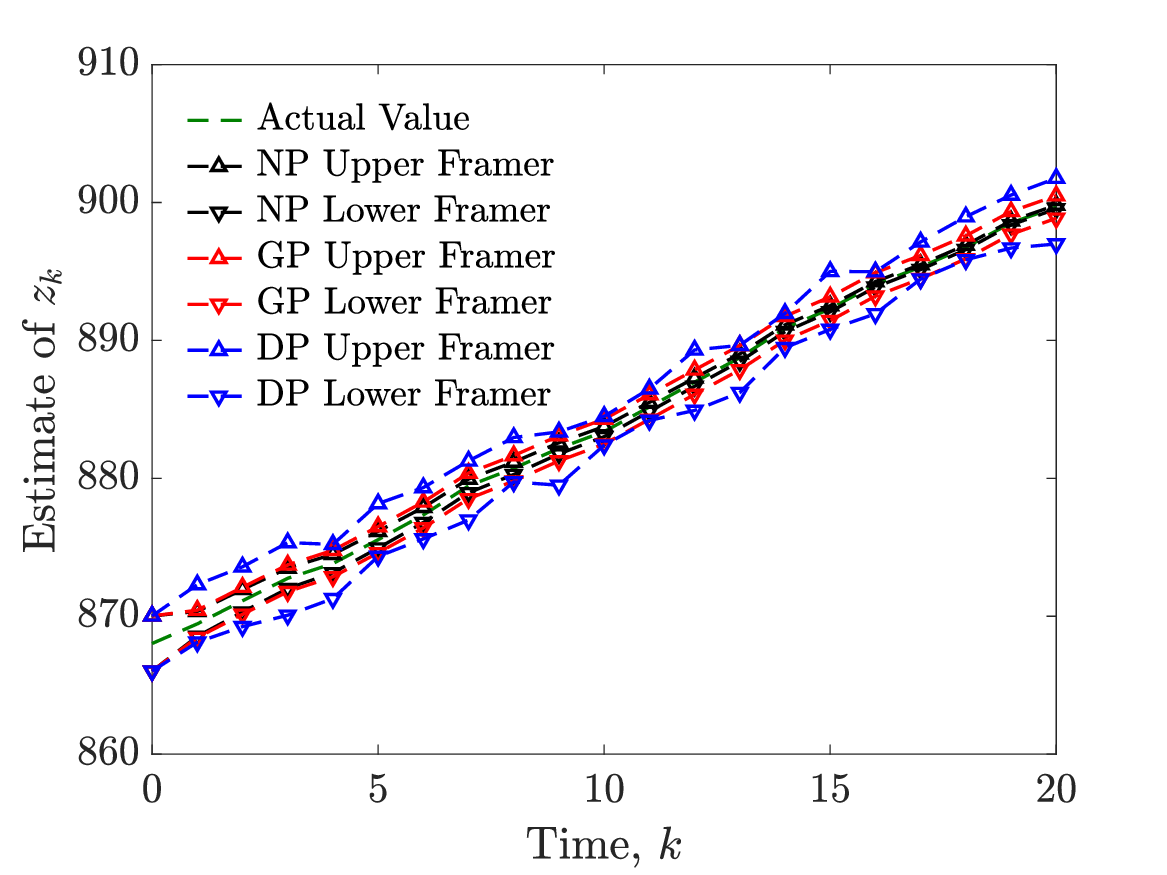}} 
{\includegraphics[width=\columnwidth,trim=5mm -5mm 10mm 0mm]{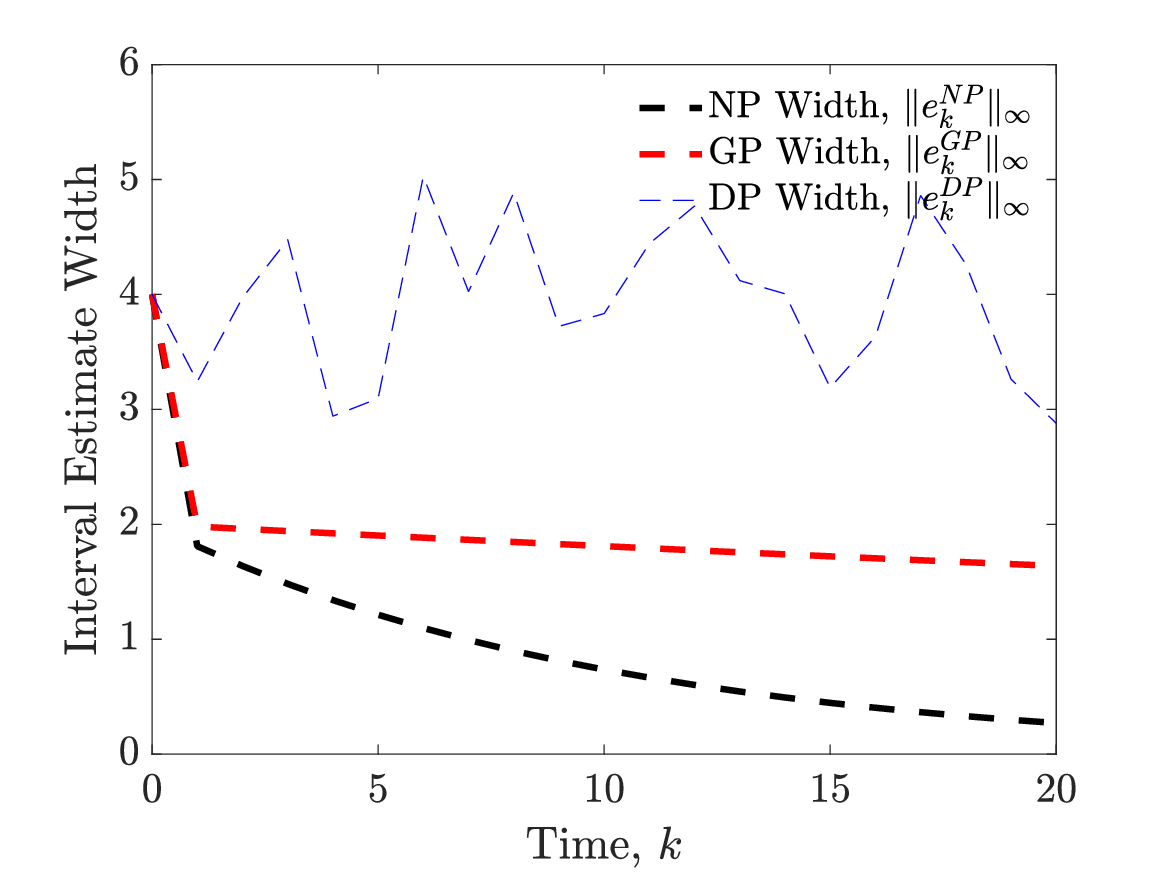}}
\vspace{-.25cm}
\caption{{{\small \textbf{Top}: actual value of $z_k$ (green), as
      well as its upper and lower framers obtained by applying a
      non-private (NP) interval observer (black), a guaranteed private
      (GP) interval observer (red), and a differentially private (DP)
      interval observer (blue). {\textbf{Bottom}: comparison of the
        no-private (NP), the guaranteed private (GP) and
        differentially private (DP)\cite{degue2020differentially}
        interval estimate widths (errors).}}}}
\label{fig:privacy_estimates}
\end{figure}
\begin{figure}
\includegraphics[width=\columnwidth,trim=5mm -5mm 10mm 0mm]{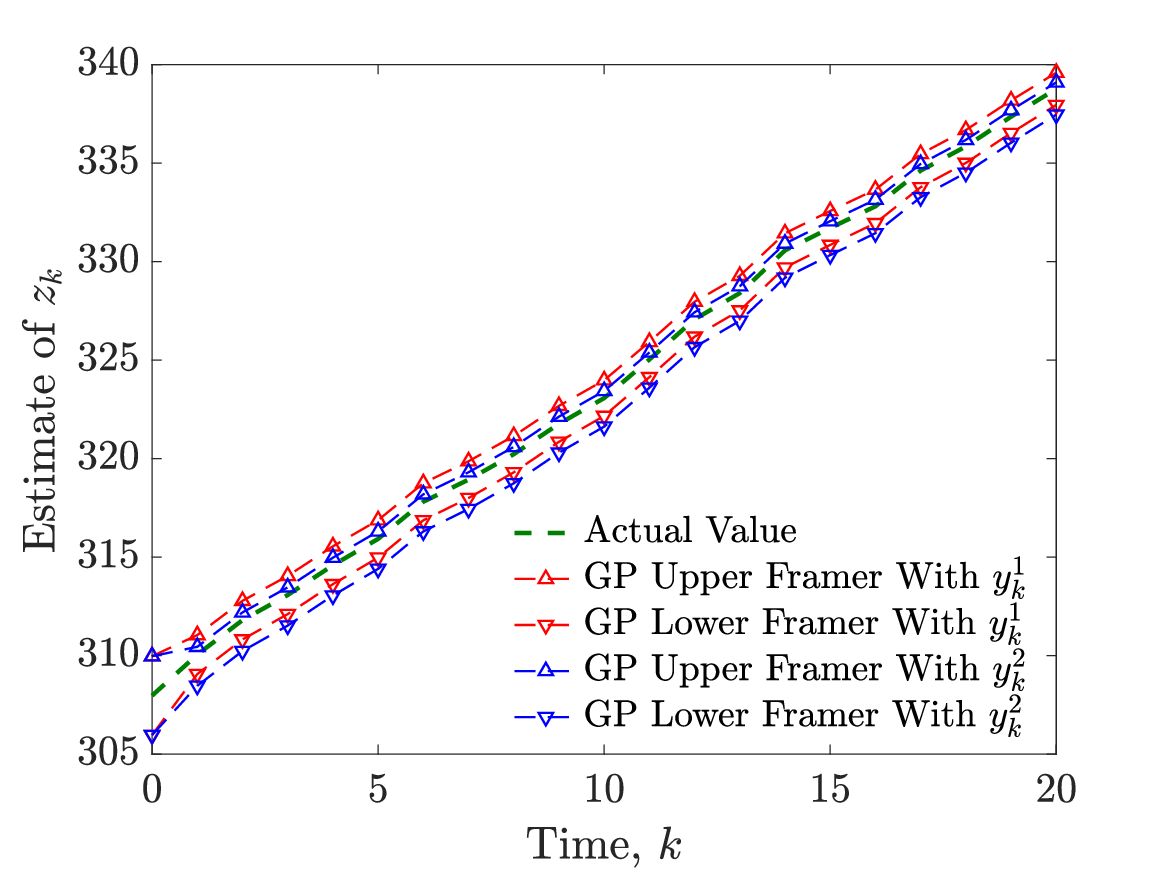}
\caption{{{\small Strong guaranteed privacy-preserving interval-valued estimates of $z_k$ using two adjacent outputs $y^1_k$ (red) and $y^2_k$ (blue).}}}
\label{fig:intersect}
\end{figure}
\section{Conclusion and \md{Future Work}} \label{sec:conclusion} A
novel generalization of guaranteed privacy was proposed in this paper,
affording a deterministic portrayal of privacy. In contrast to
stochastic differential privacy, guaranteed privacy was found to
impose deterministic constraints on the proximity between the ranges of two
sets of estimated data and their intersection. \moo{To do so}, 
  an interval observer was designed \mkj{for a perturbed bounded-error LTI system}, incorporating \mkj{a} bounded
noise perturbation factor and an observer gain. \mkj{The} observer simultaneously \mkj{outputted} guaranteed private and
stable interval-valued estimates for the desired variable. The
optimality of the design was demonstrated, and the accuracy of the mechanism was assessed by quantifying the loss incurred when
considering guaranteed privacy specifications. Future work will
consider nonlinear systems and combination of privacy and attack
resilience.  \bibliographystyle{unsrturl}

{\tiny
\bibliography{biblio}
}

\end{document}